\makeatletter\@ifundefined{UseRawInputEncoding}{\newcommand\UseRawInputEncoding{}}{\UseRawInputEncoding}\makeatother
\newtheorem{thm}{Theorem}
\newtheorem{prop}[thm]{Proposition}
\newcommand\doubleplus{\ensuremath{\mathbin{+\mkern-10mu+}}} 
\newcommand{\algrule}[1][.2pt]{\par\vskip.5\baselineskip\hrule height #1\par\vskip.5\baselineskip}
\newcommand*\ceq{\mathrel{\vcenter{\hbox{:}}{=}}}
\begin{document}
\mainmatter              
\title{{FRaGenLP}: A Generator of Random Linear Programming Problems for Cluster Computing Systems}

\titlerunning{Generator of LP Problems}
\author{Leonid B. Sokolinsky\Envelope \and Irina M. Sokolinskaya\thanks{The reported study was partially funded by the Russian Foundation for Basic Research (project No.~20-07-00092-a) and the Ministry of Science and Higher Education of the Russian Federation (government order FENU-2020-0022).}}
\authorrunning{L.B. Sokolinsky and I.M. Sokolinskaya} 
%
\tocauthor{Leonid B. Sokolinsky and Irina M. Sokolinskaya}
\institute{South Ural State University {(National Research University)} \\76, Lenin prospekt, Chelyabinsk, Russia, 454080\\
\email{leonid.sokolinsky@susu.ru}, \email{irina.sokolinskaya@susu.ru}}

\maketitle              

\begin{abstract}
The article presents and evaluates a scalable FRaGenLP algorithm for generating random linear programming problems of large dimension $n$ on cluster computing systems. To ensure the consistency of the problem and the boundedness of the feasible region, the constraint system includes $2n+1$ standard inequalities, called support inequalities. New random inequalities are generated and added to the system in a manner that ensures the consistency of the constraints. Furthermore, the algorithm uses two likeness metrics to prevent the addition of a new random inequality that is similar to one already present in the constraint system. The algorithm also rejects random inequalities that cannot affect the solution of the linear programming problem bounded by the support inequalities. The parallel implementation of the FRaGenLP algorithm is performed in C++ through the parallel BSF-skeleton, which encapsulates all aspects related to the MPI-based parallelization of the program. We provide the results of large-scale computational experiments on a cluster computing system to study the scalability of the FRaGenLP algorithm.

\keywords{Random linear programming problem $\cdot$ Problem generator $\cdot$ FRaGenLP $\cdot$ Cluster computing system $\cdot$ BSF-skeleton}
\end{abstract}

\section{Introduction}

The era of big data~\cite{sokol_1,sokol_2} has generated large-scale linear programming (LP) problems~\cite{sokol_3}. Such problems arise in economics, industry, logistics, statistics, quantum physics, and other fields. To solve them, high-performance computing systems and parallel algorithms are required. Thus, the development of new parallel algorithms for solving LP problems and the revision of current algorithms have become imperative. As examples, we can cite the works~\cite{sokol_4,sokol_5,sokol_6,sokol_7,sokol_8}. The development of new parallel algorithms for solving large-scale linear programming problems involves testing them on benchmark and random problems. One of the most well-known benchmark repositories of linear programming problems is Netlib-Lp~\cite{sokol_9}. However, when debugging LP solvers, it is often necessary to generate random LP problems with certain characteristics, with the dimension of the space and the number of constraints being the main ones.
The paper~\cite{sokol_10} suggested one of the first methods for generating random LP problems with known solutions. The method allows generating test problems of arbitrary size with a wide range of numerical characteristics. The main idea of the method is as follows. Take as a basis an LP problem with a known solution and then randomly modify it so that the solution does not change. The main drawback of the method is that fixing the optimal solution in advance significantly restricts the random nature of the resulting LP problem.

The article~\cite{sokol_11} describes the GENGUB generator, which constructs random LP problems with a known solution and given characteristics, such as the problem size, the density of the coefficient matrix, the degeneracy, the number of binding inequalities, and others.
A~distinctive feature of GENGUB is the ability to introduce generalized upper bound constraints, defined to be a (sub)set of constraints in which each variable appears at most once (i.e., has at most one nonzero coefficient). This method has the same drawback as the previous one: by preliminarily fixing the optimal solution, one significantly restricts the random nature of the resulting LP problem.

The article~\cite{sokol_12} suggests a method for generating random LP problems with a preselected solution type: bounded or unbounded, unique or multiple. Each of these structures is generated using random vectors with integer components whose range can be given. Next, an objective function that satisfies the required conditions, i.e., leads to a solution of the desired type, is obtained. The LP problem generator described in~\cite{sokol_12} is mainly used for educational purposes and is not suitable for testing new linear programming algorithms due to the limited variety of generated problems.

In the present paper, we suggest an alternative method for generating random LP problems. The method has the peculiarity of generating feasible problems of a given dimension with an unknown solution. The generated problem is fed to the input of the tested LP solver, and the latter outputs a solution that must be validated. The validator program (see, for example,~\cite{sokol_13}) validates the obtained solution. The method we suggest for generating random LP problems is named FRaGenLP (Feasible Random Generator of LP) and is implemented as a parallel program for cluster computing systems. The rest of the article is as follows. Section~\ref{sokol_Method} provides a formal description of the method for generating random LP problems and gives a sequential version of the FRaGenLP algorithm. In Section~\ref{sokol_Parallel_algorithm}, we discuss the parallel version of the FRaGenLP algorithm. In Section~\ref{sokol_Implementation}, we describe the implementation of FRaGenLP using a parallel BSF-skeleton and give the results of large-scale computational experiments on a cluster computing system. The results confirm the efficiency of our approach. Section~\ref{sokol_Conclusion} summarizes the obtained results and discusses plans to use the FRaGenLP generator in the development of an artificial neural network capable of solving large LP problems.

\section{Method for Generating Random LP Problems} \label{sokol_Method}

The method suggested in this paper generates random feasible bounded LP problems of arbitrary dimension~$n$ with an unknown solution. To guarantee the correctness of the LP problem, the constraint system includes the following \emph{support inequalities}:
\begin{equation}\label{sokol_Formula1}
\left\{ {\begin{array}{*{20}{l}}
  {{x_1}}&{}&{}&{}& \leqslant &\alpha  \\
  {}&{{x_2}}&{}&{}& \leqslant &\alpha  \\
  {}&{}& \ddots &{}& \cdots & \cdots  \\
  {}&{}&{}&{{x_n}}& \leqslant &\alpha  \\
  { - {x_1}}&{}&{}&{}& \leqslant &0 \\
  {}&{ - {x_2}}&{}&{}& \leqslant &0 \\
  {}&{}& \ddots &{}& \cdots & \cdots  \\
  {}&{}&{}&{ - {x_n}}& \leqslant &0 \\
  {{x_1}}&{ + {x_2}}& \cdots &{ + {x_n}}& \leqslant &{(n - 1)\alpha + \alpha/2}
  \end{array}} \right.
\end{equation}
Here, the positive constant $\alpha \in {\mathbb{R}_{ > 0}}$ is a parameter of the FRaGenLP generator. The number of support inequalities is $2n + 1$. The number of random inequalities is determined by a parameter~$d \in {\mathbb{Z}_{ \geqslant 0}}$. The total number $m$ of inequalities is defined by the following equation:
\begin{equation}\label{sokol_Formula2}m = 2n + 1 + d.\end{equation}
The coefficients of the objective function are specified by the vector
\begin{equation}\label{sokol_Formula3}c = \theta \left( {n,n - 1,n - 2, \ldots ,1} \right),\end{equation}
where the positive constant $\theta  \in {\mathbb{R}_{ > 0}}$ is a parameter of the FRaGenLP generator that satisfies the following condition:
\begin{equation}\label{sokol_Formula4}\theta  \leqslant \frac{\alpha }{2}.\end{equation}
From now on, we assume that the LP problem requires finding a feasible point at which the maximum of the objective function is attained. If the number $d$ of random inequalities is zero, then FRaGenLP generates an LP problem that includes only the support inequalities given in~\eqref{sokol_Formula1}. In this case, the LP problem has the following unique solution:
\begin{equation}\label{sokol_Formula5}\bar x = \left( {\alpha , \ldots ,\alpha ,{\alpha/2}} \right).\end{equation}

If the number $d$ of random inequalities is greater than zero, the FRaGenLP generator adds the corresponding number of inequalities to system~\eqref{sokol_Formula1}. The coefficients ${a_i} = \left( {{a_{i1}}, \ldots ,{a_{in}}} \right)$ of the random inequality and the constant term ${b_i}$ on the right side are calculated through the function $\operatorname{rand} (l,r)$, which generates a random real number in the interval $\left[ {l,r} \right]$ ($l,r \in \mathbb{R};l < r $), and the function $\operatorname{rsgn} ()$, which randomly selects a number from the set $\left\{ {1, - 1} \right\}$:
\begin{equation}\label{sokol_Formula6}\begin{array}{*{20}{c}}  {{a_{ij}} \ceq \operatorname{rsgn} () \cdot \operatorname{rand} (0,{a_{\mathrm{max}}}),} \\   {{b_i} \ceq \operatorname{rsgn} () \cdot \operatorname{rand} (0,{b_{\mathrm{max}}}).} \end{array}\end{equation}
Here, ${a_{\mathrm{max}}},{b_{\mathrm{max}}} \in {\mathbb{R}_{ > 0}}$ are parameters of the FRaGenLP generator. The inequality sign is always ``$\leqslant$''. Let us introduce the following notations:
\begin{equation}\label{sokol_Formula7}\operatorname{f} (x) = \langle {c,x} \rangle ;\end{equation}
\begin{equation}\label{sokol_Formula8}{h} = \left( {{\alpha  \mathord{\left/ {\vphantom {\alpha  2}} \right. \kern-\nulldelimiterspace} 2},{{ \ldots ,\alpha } \mathord{\left/ {\vphantom {{ \ldots ,\alpha } 2}} \right. \kern-\nulldelimiterspace} 2}} \right);\end{equation}
\begin{equation}\label{sokol_Formula9}\operatorname{dist}_h({a_i},{b_i}) = \frac{{\left| {\langle {{a_i},{h}} \rangle  - {b_i}} \right|}}{{\left\| {{a_i}} \right\|}};\end{equation}
\begin{equation}\label{sokol_Formula10}\pi (h,{a_i},{b_i}) = h - \frac{{\langle {{a_i},h} \rangle  - {b_i}}}{{|a_i|}^2}{a_i}.\end{equation}
Equation \eqref{sokol_Formula7} defines the objective function of the LP problem. Here and further on, $\langle { \cdot, \cdot } \rangle $ stands for the dot product of vectors. Equation \eqref{sokol_Formula8} defines the central point of the \emph{bounding hypercube} specified by the first $2n$ inequalities of system \eqref{sokol_Formula1}. Furthermore, Equation \eqref{sokol_Formula9} defines a function $\operatorname{dist}_h({a_i},{b_i})$ that gives the distance from the hyperplane $\langle {{a_i},x} \rangle  = {b_i}$ to the center~$h$ of the bounding hypercube. Here and below, $\left\|  {}\cdot{}  \right\|$ denotes the Euclidean norm. Equation \eqref{sokol_Formula10} defines a vector-valued function that expresses the orthogonal projection of the point $h$ onto the hyperplane $\langle {{a_i},x} \rangle  = {b_i}$.

To obtain a random inequality $\langle {{a_i},x} \rangle  \leqslant {b_i}$, we calculate the coordinates of the coefficient vector ${a_i}$ and the constant term ${b_i}$ using a pseudorandom rational number generator. The generated random inequality is added to the constraint system if and only if the following conditions hold:
\begin{equation}\label{sokol_Formula11}\langle {{a_i},{h}} \rangle  \leqslant {b_i};\end{equation}
\begin{equation}\label{sokol_Formula12}\rho  < \operatorname{dist}_h({a_i},{b_i}) \leqslant \theta ;\end{equation}
\begin{equation}\label{sokol_Formula13}\operatorname{f} \left( {\pi \left( {{h},{a_i},{b_i}} \right)} \right) > \operatorname{f} \left( {{h}} \right);\end{equation}
\begin{equation}\label{sokol_Formula14}\forall l \in \{ 1, \ldots ,i - 1\} :\neg \operatorname{like} ({a_i},{b_i},{a_l},{b_l}).\end{equation}
Condition~\eqref{sokol_Formula11} requires that the center of the bounding hypercube be a feasible point for the considered random inequality. If the condition does not hold, then the inequality $ - \langle {{a_i},x} \rangle  \leqslant  - {b_i}$ is added instead of $\langle {{a_i},x} \rangle  \leqslant {b_i}$. Condition~\eqref{sokol_Formula12} requires that the distance from the hyperplane $\langle {{a_i},x} \rangle  = {b_i}$ to the center~${h}$ of the bounding hypercube be greater than $\rho $ but not greater than $\theta $. The constant $\rho  \in {\mathbb{R}_{ > 0}}$ is a parameter of the FRaGenLP generator and must satisfy the condition $\rho  < \theta $, where $\theta $, in turn, satisfies condition \eqref{sokol_Formula4}. Condition~\eqref{sokol_Formula13} requires that the objective function value at the projection of the point ${h}$ onto the hyperplane $\langle {{a_i},x} \rangle  = {b_i}$ be greater than the objective function value at the point ${h}$. This condition combined with \eqref{sokol_Formula11} and \eqref{sokol_Formula12} cuts off constraints that cannot affect the solution of the LP problem. Finally, condition~\eqref{sokol_Formula14} requires that the new inequality be \emph{dissimilar} from all previously added ones, including the support ones. This condition uses the Boolean function ``$\operatorname{like}$'', which determines the \emph{likeness} of the inequalities $\langle {{a_i},x} \rangle  \leqslant {b_i}$ and $\langle {{a_l},x} \rangle  \leqslant {b_l}$ through the following equation:
\begin{equation}\label{sokol_Formula15}\operatorname{like} ({a_i},{b_i},{a_l},{b_l}) = \left\| {\frac{{{a_i}}}{{\left\| {{a_i}} \right\|}} - \frac{{{a_l}}}{{\left\| {{a_l}} \right\|}}} \right\| < {L_{\mathrm{max}}} \wedge \left| {\frac{{{b_i}}}{{\left\| {{a_i}} \right\|}} - \frac{{{b_l}}}{{\left\| {{a_l}} \right\|}}} \right| < {S_{\mathrm{min}}}.\end{equation}
The constants ${L_{\mathrm{max}}},{S_{\mathrm{min}}} \in {\mathbb{R}_{ > 0}}$ are parameters of the FRaGenLP generator. In this case, the parameter~${L_{\mathrm{max}}}$ must satisfy the condition
\begin{equation}\label{sokol_Formula16}{L_{\mathrm{max}}} \leqslant 0.7\end{equation}
(we will explain the meaning of this constraint below). According to~\eqref{sokol_Formula15}, inequalities $\langle {{a_i},x} \rangle  \leqslant {b_i}$ and $\langle {{a_l},x} \rangle  \leqslant {b_l}$ are \emph{similar} if the following two conditions hold:
\begin{equation}\label{sokol_Formula17}\left\| {\frac{{{a_i}}}{{\left\| {{a_i}} \right\|}} - \frac{{{a_l}}}{{\left\| {{a_l}} \right\|}}} \right\| < {L_{\mathrm{max}}};\end{equation}
\begin{equation}\label{sokol_Formula18}\left| {\frac{{{b_i}}}{{\left\| {{a_i}} \right\|}} - \frac{{{b_l}}}{{\left\| {{a_l}} \right\|}}} \right| < {S_{\mathrm{min}}}.\end{equation}

Condition~\eqref{sokol_Formula17} evaluates the measure of parallelism of the hyperplanes \linebreak
\mbox{$\langle {{a_i},x} \rangle  = {b_i}$} and $\langle {{a_l},x} \rangle  = {b_l}$,
which bound the feasible regions of the corresponding inequalities. Let us explain this. The unit vectors $e_i = a_i/\left\| a_i \right\|$ and $e_l = a_l/\left\| a_l \right\|$ are normal to the hyperplanes $\langle {{a_i},x} \rangle  = {b_i}$ and $\langle {{a_l},x} \rangle  = {b_l}$, respectively. Let us introduce the notation $\delta  = \left\| e_i - e_l\right\|$. If $\delta  = 0$, then the hyperplanes are parallel. If $0 \leqslant \delta  < {L_{\mathrm{max}}}$, then the hyperplanes are considered to be \emph{nearly parallel}.

Condition~\eqref{sokol_Formula18} evaluates the \emph{closeness} of the parallel hyperplanes $\langle {{a_i},x} \rangle  = {b_i}$ and $\langle {{a_l},x} \rangle  = {b_l}$. Indeed, the scalar values $\beta _i = b_i/\left\| a_i \right\|$ and  \linebreak $\beta _l = b_l/\left\| a_l \right\|$ are the normalized constant terms. Let us introduce the notation $\sigma  = | \beta _i - \beta _l |$. If $\sigma  = 0$, then the parallel hyperplanes coincide. If the hyperplanes are nearly parallel and $0 \leqslant \sigma  < S_{\mathrm{min}}$, then they are considered to be \emph{nearly concurrent}.

Two linear inequalities in $\mathbb{R}^n$ are considered \emph{similar} if the corresponding hyperplanes are nearly parallel and nearly concurrent.

The constraint \eqref{sokol_Formula16} for the parameter ${L_{\mathrm{max}}}$ is based on the following proposition.

\begin{prop}\label{sokol_Proposition1}
Let the two unit vectors $e,e' \in {\mathbb{R}^n}$ and the angle $\varphi  < \pi $ between them be given. Then,
\begin{equation}\label{sokol_Formula19}\left\| {e - e'} \right\| = \sqrt {2(1 - \cos \varphi )} .\end{equation}
\end{prop}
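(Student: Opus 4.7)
My plan is to prove this by a direct computation of $\|e-e'\|^2$ using the bilinearity of the dot product, then take a square root. The only nontrivial input is the identity $\langle e,e'\rangle = \cos\varphi$, which follows from the standard geometric definition of the angle between two vectors together with $\|e\|=\|e'\|=1$.

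First I would expand
\begin{equation*}
\|e-e'\|^2 = \langle e-e',\, e-e'\rangle = \langle e,e\rangle - 2\langle e,e'\rangle + \langle e',e'\rangle.
\end{equation*}
Next I would use $\|e\|=\|e'\|=1$ to replace $\langle e,e\rangle$ and $\langle e',e'\rangle$ by $1$, and the relation $\langle e,e'\rangle = \|e\|\,\|e'\|\cos\varphi = \cos\varphi$ to rewrite the cross term. This gives $\|e-e'\|^2 = 2(1-\cos\varphi)$. Finally I would take the (nonnegative) square root, which is legitimate because $1-\cos\varphi \geqslant 0$ for any real $\varphi$, yielding \eqref{sokol_Formula19}.

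The hypothesis $\varphi < \pi$ is not actually required for the identity itself to hold; it is presumably stated so that $e$ and $e'$ are not antipodal and so that $\varphi$ is the genuine (unoriented) angle in $[0,\pi)$ determined by the two unit vectors. I would note this briefly but would not lean on it in the calculation. The only conceptual obstacle is invoking $\langle e,e'\rangle = \cos\varphi$ as a definition/consequence of the angle between vectors; since this is a textbook fact about Euclidean inner products, the proof reduces to a two-line computation and no real obstacle is expected.
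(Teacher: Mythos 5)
Your proof is correct and follows essentially the same route as the paper: expand $\left\| e - e' \right\|^2$, use $\left\| e \right\| = \left\| e' \right\| = 1$ and $\langle e, e' \rangle = \cos\varphi$, then take the square root. The only cosmetic difference is that the paper performs the expansion coordinate-wise while you invoke bilinearity of the inner product directly; your remark that $\varphi < \pi$ is not needed for the identity is also accurate.
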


\begin{proof} By the definition of the norm in Euclidean space, we have
\[\begin{gathered}
  \left\| {e - e'} \right\| = \sqrt {\sum\limits_j {{{\left( {{e_j} - {{e'}_j}} \right)}^2}} }  = \sqrt {\sum\limits_j {\left( {{e_j}^2 - 2{e_j}{{e'}_j} + {{e'}_j}^2} \right)} }=  \hfill \\
   = \sqrt {\sum\limits_j {{e_j}^2}  - 2\sum\limits_j {{e_j}{{e'}_j}}  + \sum\limits_j {{{e'}_j}^2} }  = \sqrt {1 - 2\langle {{e_j},{{e'}_j}} \rangle  + 1} . \hfill \\
\end{gathered} \]
Thus,
\begin{equation}\label{sokol_Formula20}\left\| {e - e'} \right\| = \sqrt {2\left( {1 - \langle {{e_j},{{e'}_j}} \rangle } \right)} .\end{equation}
By the definition of the angle in Euclidean space, we have, for unit vectors,
\[\langle {{e_j},{{e'}_j}} \rangle  = \cos \varphi .\]
Substituting in \eqref{sokol_Formula20} the expression obtained, we have
\[\left\| {e - e'} \right\| = \sqrt {2\left( {1 - \cos \varphi } \right)} .\]
The proposition is proven.
\end{proof}

It is reasonable to consider that two unit vectors $e,e'$ are nearly parallel if the angle between them is less than $\pi/4$. In this case, according to \eqref{sokol_Formula19}, we have
\[\left\| {e - e'} \right\| < \sqrt {2\left( {1 - \cos \frac{\pi }{4}} \right)} .\]
Taking into account that $\cos(\pi/4) \approx 0.707$, we obtain the required estimate:
\[\left\| {e - e'} \right\| < 0.7.\]
\begin{figure}[t]
  \centering
  \includegraphics[scale=0.75]{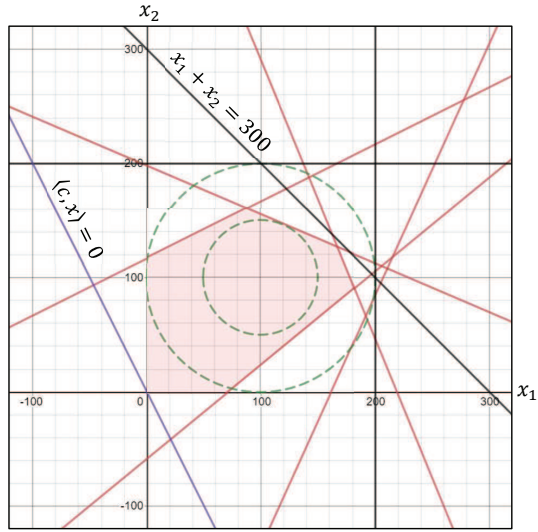}
  \caption{Random LP problem with $n=2$, $d=5$, $\alpha=200$, $\theta=100$, $\rho=50$,  $S_{\mathrm{min}}=100$, $L_{\mathrm{max}}=0.35$, $a_{\mathrm{max}}=1000$, and $b_{\mathrm{max}}=10\,000$}
  \label{sokol_Fig1}
\end{figure}

An example of a two-dimensional LP problem generated by FRaGenLP is shown in Figure~\ref{sokol_Fig1}. The purple color indicates the line defined by the coefficients of the objective function; the black lines correspond to the support inequalities, and the red lines correspond to the random inequalities. For the sake of clarity, we use green dashed lines to plot the large and the small circles defined by the equations ${\left( {{x_1} - 100} \right)^2} + {\left( {{x_2} - 100} \right)^2} = {100^2}$ and ${\left( {{x_1} - 100} \right)^2} + {\left( {{x_2} - 100} \right)^2} = {50^2}$. According to condition~\eqref{sokol_Formula12}, any random line must intersect the large green circle but not the small green circle. The semitransparent red color indicates the feasible region of the generated LP problem.
\begin{algorithm}[t]
\caption{Sequential algorithm for generating a random LP problem}\label{sokol_alg1}
\textbf{Parameters}: $n,d,\alpha,\theta, \rho, S_{\mathrm{min}},L_{\mathrm{max}},a_{\mathrm{max}},b_{\mathrm{max}}$
\algrule
\begin{algorithmic}[1]
\State $k \ceq 0$
\State $A \ceq [\:]$
\State $B \ceq [\:]$
\State $\operatorname{AddSupport}(A,B)$
\State \textbf{for} $j=n\ldots1$ \textbf{do} $c_j \ceq \theta\cdot j$
\State \textbf{if} $d=0$ \textbf{goto} 19
\State \textbf{for} $j=1\ldots n$ \textbf{do} $a_j \ceq \operatorname{rsign}( )\cdot \operatorname{rand}(0,a_{\mathrm{max}})$
\State $b \ceq \operatorname{rsign}( )\cdot \operatorname{rand}(0,b_{\mathrm{max}})$
\State \textbf{if} $\langle a,h\rangle\leqslant b$ \textbf{goto} 12
\State \textbf{for} $j=1\ldots n$ \textbf{do} $a_j \ceq -a_j$
\State $b \ceq -b$
\State \textbf{if} $\operatorname{dist}_h(a,b)<\rho$ \textbf{or} $\operatorname{dist}_h(a,b)>\theta$ \textbf{goto} 7
\State \textbf{if} $\operatorname{f}(\operatorname{\pi}(h,a,b))\leqslant\operatorname{f}(h)$ \textbf{goto} 7
\State \textbf{for} \textbf{all} $(\bar{a},\bar{b})\in(A,B)$ \textbf{do} \textbf{if} $\operatorname{like}(a,b,\bar{a},\bar{b})$ \textbf{goto} 7
\State $A \ceq A\doubleplus [a]$
\State $B \ceq B\doubleplus [b]$
\State $k \ceq k+1$
\State \textbf{if} $k<d$ \textbf{goto} 7
\State \textbf{output} $A,B,c$
\State \textbf{stop}
\end{algorithmic}
\end{algorithm}

Algorithm~1 represents a sequential implementation of the described method. Step~1 assigns zero value to the counter $k$ of random inequalities. Step~2 creates an empty list $A$ to store the coefficients of the inequalities. Step~3 creates an empty list $B$ to store the constant terms. Step~4 adds the coefficients and constant terms of the support inequalities \eqref{sokol_Formula1} to the lists $A$ and~$B$, respectively. Step~5 generates the coefficients of the objective function according to~\eqref{sokol_Formula3}. If the parameter $d$, which specifies the number of random inequalities, is equal to zero, then Step~6 passes the control to Step~19. Steps 7 and 8 generate the coefficients and the constant term of the new random inequality. Step~9 checks condition \eqref{sokol_Formula11}. If the condition does not hold, then the signs of the coefficients and the constant term are reversed (Steps 10,~11). Step~12 checks condition \eqref{sokol_Formula12}. Step~13 checks condition \eqref{sokol_Formula13}. Step~14 checks condition \eqref{sokol_Formula14}. Step~15 appends the coefficients of the new random inequality to the list~$A$ ($\doubleplus$ denotes the concatenation of lists). Step~16 appends the constant term of the new random inequality to the list~$B$. Step~17 increments the counter of added random inequalities by one. If the number of added random inequalities has not reached the given quantity $d$, then Step~18 passes the control to Step~7 to generate the next inequality. Step~19 outputs the results. Step~20 stops computations.

\section{Parallel Algorithm for Generating Random LP Problems}\label{sokol_Parallel_algorithm}

Implicit loops generated by passing the control from Steps 12--14 to Step~7 of Algorithm~1 can result in high overheads. For example, during the generation of the LP problem represented in Figure~1, there were 112\,581 returns from Step~12 to label~7, 32\,771 from Step~13, and 726 from Step~14. Therefore, generating a large random LP problem on a commodity personal computer can take many hours. To overcome this obstacle, we developed a parallel version of the FRaGenLP generator for cluster computing systems. This version is presented as Algorithm~2. It is based on the BSF parallel computation model~\cite{sokol_15,sokol_15_1}, which assumes the master--slave paradigm~\cite{sokol_16}. According to the BSF model, the master node serves as a control and communication center. All slave nodes execute the same code but on different data.
\begin{algorithm}[p]
\caption{Parallel algorithm for generating a random LP problem}\label{alg2}
\textbf{Parameters}: $n,d,\alpha,\theta, \rho, S_{\mathrm{min}},L_{\mathrm{max}},a_{\mathrm{max}},b_{\mathrm{max}}$
\algrule
\begin{multicols}{2}
\begin{center}
\textbf{Master} \\
\textbf{Slave (\emph{l}=1,\dots,\emph{L})}
\end{center}
\end{multicols}
\algrule
\begin{multicols}{2}
\begin{algorithmic}[1]
\State $k\ceq0$
\State $A_S\ceq[\:]$
\State $B_S\ceq[\:]$
\State $\operatorname{AddSupport}(A_S,B_S)$
\State \textbf{for} $j=n\ldots1$ \textbf{do} $c_j \ceq \theta\cdot j$
\State \textbf{output} $A_S,B_S,c$
\State \textbf{if} $d=0$ \textbf{goto} 36
\State $A_R\ceq[\:]$
\State $B_R\ceq[\:]$
\State
\State
\State
\State
\State
\State
\State
\State
\State \textbf{RecvFromSlaves} $a^{(1)},b^{(1)},...,a^{(L)},b^{(L)}$
\For{$l=1\ldots L$}
\State $isLike \ceq false$
\For{\textbf{all} $(\bar{a},\bar{b})\in(A_R,B_R)$}
\If{$\operatorname{like}(a^{(l)},b^{(l)},\bar{a},\bar{b})$}
\State$isLike \ceq true$
\State\textbf{goto} 27
\EndIf
\EndFor
\State\textbf{if} $isLike$ \textbf{continue}
\State$A_R \ceq A_R\doubleplus [a^{(l)}]$
\State$B_R \ceq B_R\doubleplus [b^{(l)}]$
\State$k \ceq k+1$
\State\textbf{if} $k=d$ \textbf{goto} 33
\EndFor
\State \textbf{SendToSlaves} $k$
\State \textbf{if} $k<d$ \textbf{goto} 18
\State \textbf{output} $A_R,B_R$
\State \textbf{stop}
\end{algorithmic}
\begin{algorithmic}[1]
\State \textbf{if} $d=0$ \textbf{goto} 36
\State $A_S\ceq[\:]$
\State $B_S\ceq[\:]$
\State $\operatorname{AddSupport}(A_S,B_S)$
\For{$j=1\ldots n$}
\State $a_j^{(l)} \ceq \operatorname{rsign}( )\cdot \operatorname{rand}(0,a_{\mathrm{max}})$
\EndFor
\State $b^{(l)} \ceq \operatorname{rsign}( )\cdot \operatorname{rand}(0,b_{\mathrm{max}})$
\State \textbf{if} $\langle a^{(l)},h\rangle\le b^{(l)}$ \textbf{goto} 12
\State \textbf{for} $j=1\ldots n$ \textbf{do} $a_j^{(l)} \ceq -a_j^{(l)}$
\State $b^{(l)} \ceq -b^{(l)}$
\State \textbf{if} $\operatorname{dist}_h(a^{(l)},b^{(l)})<\rho$ \textbf{goto} 5
\State \textbf{if} $\operatorname{dist}_h(a^{(l)},b^{(l)})>\theta$ \textbf{goto} 5
\State \textbf{if} $\operatorname{f}(\operatorname{\pi}(h,a^{(l)},b^{(l)}))\leqslant\operatorname{f}(h)$ \textbf{goto} 5
\For{\textbf{all} $(\bar{a},\bar{b})\in(A_S,B_S)$}
\State \textbf{if} $\operatorname{like}(a^{(l)},b^{(l)},\bar{a},\bar{b})$ \textbf{goto} 5
\EndFor
\State \textbf{SendToMaster} $a^{(l)},b^{(l)}$
\State
\State
\State
\State
\State
\State
\State
\State
\State
\State
\State
\State
\State
\State
\State \textbf{RecvFromMaster} $k$
\State \textbf{if} $k<d$ \textbf{goto} 5
\State
\State \textbf{stop}
\end{algorithmic}
\end{multicols}
\end{algorithm}

Let us discuss Algorithm~2 in more detail. First, we look at the steps performed by the master node. Step~1 assigns zero value to the counter of random inequalities. Step~2 creates an empty list ${A_S}$ to store the coefficients of the support inequalities. Step~3 creates an empty list ${B_S}$ to store the constant terms of the support inequalities. Step~4 adds the coefficients and constant terms of the support inequalities \eqref{sokol_Formula1} to the lists ${A_S}$ and~${B_S}$, respectively. Step~5 generates the coefficients of the objective function according to~\eqref{sokol_Formula3}. Step~6 outputs the coefficients and constant term of the support inequalities. If the parameter $d$, which specifies the number of random inequalities, is equal to zero, then Step~7 passes the control to Step~36, which terminates the computational process in the master node. Step~8 creates an empty list ${A_R}$ to store the coefficients of the random inequalities. Step~9 creates an empty list ${B_R}$ to store the constant terms of the random inequalities. In Step~18, the master node receives one random inequality from each slave node. Each of these inequalities satisfies conditions \eqref{sokol_Formula11}--\eqref{sokol_Formula13} and is not similar to any of the support inequalities. These conditions are ensured by the slave nodes. In the loop consisting of Steps~19--32, the master node checks all received random inequalities for similarity with the random inequalities previously included in the lists ${A_R}$ and ${B_R}$. The similar new inequalities are rejected, and the dissimilar ones are added to the lists ${A_R}$ and ${B_R}$. In this case, the inequality counter is increased by one each time some inequality is added to the lists. If the required number of random inequalities has already been reached, then Step~31 performs an early exit from the loop. Step~33 sends the current number of added random inequalities to the slave nodes. If this quantity is less than $d$, then Step~34 passes the control to Step~18, which requests a new portion of random inequalities from the slave nodes. Otherwise, Step~35 outputs the results, and Step~36 terminates the computational process in the master node.

Let us consider now the steps performed by the $l$-th slave node. If the parameter~$d$, which specifies the number of random inequalities, is equal to zero, then Step~1 passes the control to Step~36, which terminates the computational process in the slave node. Otherwise, Steps~2 and~3 create the empty lists ${A_S}$ and ${B_S}$ to store the support inequalities. Step~4 adds the coefficients and constant terms of the support inequalities \eqref{sokol_Formula1} to the lists ${A_S}$ and~${B_S}$, respectively. Steps 5--8 generate a new random inequality. Step~9 checks condition \eqref{sokol_Formula11}. If this condition does not hold, then the signs of the coefficients and the constant term are reversed (Steps~10 and~11). Steps~12--14 check conditions \eqref{sokol_Formula12} and \eqref{sokol_Formula13}. Steps~15--17 check the similarity of the generated inequality to the support inequalities. If any one of these conditions does not hold, then the control is passed to Step~5 to generate a new random inequality. If all conditions hold, then Step~18 sends the constructed random inequality to the master node. In Step~33  , the slave receives from the master the current number of obtained random inequalities. If this quantity is less than the required number, then Step~34 passes the control to Step~5 to generate a new random inequality. Otherwise, Step~36 terminates the computational process in the slave node.

\section{Software Implementation and the Computational Experiments}\label{sokol_Implementation}

We implemented the parallel Algorithm~2 in C++ through the parallel BSF-skele\-ton~\cite{sokol_17}, which is based on the BSF parallel computation model~\cite{sokol_15} and encapsulates all aspects related to the parallelization of the program using the MPI library~\cite{sokol_19}.

The BSF-skeleton requires the representation of the algorithm in the form of operations on lists using the higher-order functions \emph{Map} and \emph{Reduce}, defined by the Bird--Meertens formalism~\cite{sokol_20}. The required representation can be constructed as follows. Set the length of the \emph{Map} and \emph{Reduce} lists equal to the number of slave MPI processes. Define the \emph{Map} list items as empty structures: \[{\text{struct PT\_bsf\_mapElem\_T\{ \} }}{\text{.}}\]
Each element of the \emph{Reduce} list stores the coefficients and the constant term of one random inequality $\langle {a,x} \rangle \leqslant b$:
\[{\text{struct PT\_bsf\_reduceElem\_T\{ float a[n]; float b\} }}{\text{.}}\]

Each slave MPI process generates one random inequality using the \linebreak \emph{PC\_bsf\_MapF} function, which executes Steps 5--17 of Algorithm~2. The slave MPI process stores the inequality that satisfies all conditions to its local \emph{Reduce} list consisting of a single item. The master MPI process receives the generated elements from the slave MPI processes and places them in its \emph{Reduce} list (this code is implemented in the problem-independent part of the BSF-skeleton). After that, the master MPI process checks each obtained inequality for similarity with the previously added ones. If no matches are found, the master MPI process adds the inequality just checked to its local \emph{Reduce} list. These actions, corresponding to Steps~19--32 of Algorithm~2, are implemented as the standard function \emph{PC\_bsf\_ProcessResults} of the BSF-skeleton. The source code of the FRaGenLP parallel program is freely available on the Internet at \url{https://github.com/leonid-sokolinsky/BSF-LPP-Generator}.
\begin{table}[t]
\caption{Specifications of the ``Tornado SUSU'' computing cluster}
\centering
\begin{tabular}{l|l}
  \hline
  Parameter & Value \\
  \hline
  Number of processor nodes & 480 \\
  Processor & Intel Xeon X5680 (6 cores, 3.33 GHz) \\
  Processors per node & 2\\
  Memory per node & 24 GB DDR3\\
  Interconnect & InfiniBand QDR (40 Gbit/s) \\
  Operating system & Linux CentOS\\
  \hline
\end{tabular}\label{sokol_Table1}
\end{table}

Using the program, we conducted large-scale computational experiments on the cluster computing system ``Tornado SUSU''~\cite{sokol_21}. The specifications of the system are given in Table~\ref{sokol_Table1}. The computations were performed for several dimensions, namely $n = 3000$, $n = 5500$, and $n = 15\,000$. The total numbers of inequalities were, respectively, $6301$, $10\,001$, and $31\,501$. The corresponding numbers of random inequalities were $300$, $500$, and $1500$, respectively. Throughout the experiments, we used the following parameter values: $\alpha  = 200$ (the length of the bounding hypercube edge), $\theta  = 100$ (the radius of the large hypersphere), $\rho  = 50$ (the radius of the small hypersphere),  ${L_{\mathrm{max}}} = 0.35$ (the upper bound of \emph{near parallelism} for hyperplanes), ${S_{\mathrm{min}}} = 100$ (the minimum acceptable closeness for hyperplanes), ${a_{\mathrm{max}}} = 1000$ (the upper absolute bound for the coefficients),  and ${b_{\mathrm{max}}} = 10\,000$ (the upper absolute bound for the constant terms).

The results of the experiments are shown in Figure~\ref{sokol_Fig2}. Generating a random LP problem with $31\,501$ constraints with a configuration consisting of a master node and a slave node took 12 minutes. Generating the same problem with a configuration consisting of a master node and $170$ slave nodes took 22 seconds. The analysis of the results showed that the scalability bound (the maximum of the speedup curve) of the algorithm significantly depends on the dimension of the problem. For $n = 3000$, the scalability bound was 50 processor nodes approximately. This bound increased up to 110 nodes for $n = 5000$, and to 200 nodes for $n = 15\,000$. A further increase in problem size causes the processor nodes to run out of memory. It should be noted that the scalability bound of the algorithm significantly depends on the number of random inequalities too. Increasing this number by a factor of 10 resulted in a twofold reduction of the scalability bound. This is because an increase in the number of slave nodes results in a significant increase in the portion of sequential computations performed by the master node in Steps 19--32, during which the slave nodes are idle.
\begin{figure}[t]
  \centering
  \includegraphics[scale=1]{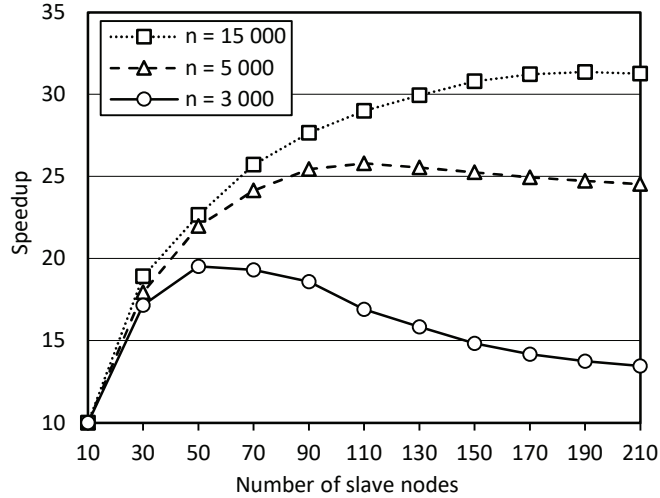}
  \caption{Speedup curves of the FRaGenLP parallel algorithm for various dimensions}
  \label{sokol_Fig2}
\end{figure}

\section{Conclusions}\label{sokol_Conclusion}

In this paper, we described the parallel FRaGenLP algorithm for generating random feasible bounded LP problems on cluster computing systems. In addition to random inequalities, the generated constraint systems include a standard set of inequalities called \emph{support inequalities}. They ensure the boundedness of the feasible region of the LP problem. In geometric terms, the feasible region of the support inequalities is a hypercube with edges adjacent to the coordinate axes, and the vertex that is farthest from the origin is cut off. The objective function is defined in such a manner that its coefficients decrease monotonically. The coefficients and constant terms of the random inequalities are obtained using a random number generator.
If the feasible region of a randomly generated inequality does not include the center of the bounding hypercube, then the sign of the inequality is reversed. Furthermore, not every random inequality is included in the constraint system. The random inequalities that cannot affect the solution of the LP problem for a given objective function are rejected. The inequalities, for which the bounding hyperplane intersects a small hypersphere located at the center of the bounding hypercube are also rejected. This ensures the feasibility of the constraint system. Moreover, any random inequality that is ``similar'' to at least one of the inequalities already added to the system (including the support ones) is also rejected. To define the ``similarity'' of inequalities, two formal metrics are introduced for bounding hyperplanes: the measure of parallelism and the measure of closeness.

The parallel algorithm is based on the BSF parallel computation model, which relies on the master--slave paradigm. According to this paradigm, the master node serves as a control and communication center. All slave nodes execute the same code but on different data. The parallel implementation was performed in C++ through the parallel BSF-skeleton, which encapsulates all aspects related to the MPI-based parallelization of the program. The source code of the \mbox{FRaGenLP} generator is freely available on the Internet at \linebreak \url{https://github.com/leonid-sokolinsky/BSF-LPP-Generator}.

Using this implementation, we conducted large-scale computational experiments on a cluster computing system. As the experiments showed, the parallel FRaGenLP algorithm demonstrates good scalability, up to 200 processor nodes for $n=15\,000$. Generating a random LP problem with $31\,501$ constraints takes 22 seconds with a configuration consisting of 171 processor nodes. Generating the same problem with a configuration consisting of a processor node takes 12 minutes. The program was used to generate a dataset of 70\,000 samples for training an artificial neural network capable of quickly solving large LP problems.


\begin{thebibliography}{10}

\bibitem{sokol_1} Jagadish, H. V., Gehrke, J., Labrinidis, A., Papakonstantinou, Y., Patel, J.M., Ramakrishnan, R., Shahabi, C.: Big data and its technical challenges. Communications of the ACM. 57(7), 86--94 (2014). \url{https://doi.org/10.1145/2611567}.

\bibitem{sokol_2} Hartung, T.: Making Big Sense From Big Data. Frontiers in Big Data. 1, 5 (2018). \url{https://doi.org/10.3389/fdata.2018.00005}.

\bibitem{sokol_3} Sokolinskaya, I., Sokolinsky, L.B.: On the Solution of Linear Programming Problems in the Age of Big Data. In: Sokolinsky, L. and Zymbler, M. (eds.) Parallel Computational Technologies. PCT 2017. Communications in Computer and Information Science, vol.~753. pp.~86--100. Springer, Cham (2017). \url{https://doi.org/10.1007/978-3-319-67035-5\_7}.

\bibitem{sokol_4} Sokolinsky, L.B., Sokolinskaya, I.M.: Scalable Method for Linear Optimization of Industrial Processes. In: Proceedings --- 2020 Global Smart Industry Conference, GloSIC 2020. pp.~20--26. Article number 9267854. IEEE (2020). \url{https://doi.org/10.1109/GloSIC50886.2020.9267854}.

\bibitem{sokol_5} Mamalis, B., Pantziou, G.: Advances in the Parallelization of the Simplex Method. In: Zaroliagis, C., Pantziou, G., and Kontogiannis, S. (eds.) Algorithms, Probability, Networks, and Games. Lecture Notes in Computer Science, vol. 9295. pp.~281--307. Springer, Cham (2015). \url{https://doi.org/10.1007/978-3-319-24024-4\_17}.

\bibitem{sokol_6} Huangfu, Q., Hall, J.A.J.: Parallelizing the dual revised simplex method. Mathematical Programming Computation. 10(1), 119--142 (2018). \url{https://doi.org/10.1007/s12532-017-0130-5}.

\bibitem{sokol_7} Tar, P., Stagel, B., Maros, I.: Parallel search paths for the simplex algorithm. Central European Journal of Operations Research. 25(4), 967--984 (2017). \url{https://doi.org/10.1007/s10100-016-0452-9}.

\bibitem{sokol_8} Yang, L., Li, T., Li, J.: Parallel predictor-corrector interior-point algorithm of structured optimization problems. In: 3rd International Conference on Genetic and Evolutionary Computing, WGEC 2009. pp.~256--259 (2009). \url{https://doi.org/10.1109/WGEC.2009.68}.

\bibitem{sokol_9} Gay, D.M.: Electronic mail distribution of linear programming test problems. Mathematical Programming Society COAL Bulletin. (13), 10--12 (1985).

\bibitem{sokol_10} Charnes, A., Raike, W.M., Stutz, J.D., Walters, A.S.: On Generation of Test Problems for Linear Programming Codes. Communications of the ACM. 17(10), 583--586 (1974). \url{https://doi.org/10.1145/355620.361173}.

\bibitem{sokol_11} Arthur, J.L., Frendewey, J.O.: GENGUB: A generator for linear programs with generalized upper bound constraints. Computers and Operations Research. 20(6), 565--573 (1993). \url{https://doi.org/10.1016/0305-0548(93)90112-V}.

\bibitem{sokol_12} Castillo, E., Pruneda, R.E., Esquivel, Mo.: Automatic generation of linear programming problems for computer aided instruction. International Journal of Mathematical Education in Science and Technology. 32(2), 209--232 (2001). \url{https://doi.org/10.1080/00207390010010845}.

\bibitem{sokol_13} Dhiflaoui, M., Funke, S., Kwappik, C., Mehlhom, K., Seel, M., Schtmer, E., Schulte, R., Weber, D.: Certifying and Repairing Solutions to Large LPs How Good are LP-solvers? In: SODA’03: Proceedings of the fourteenth annual ACM--SIAM symposium on Discrete algorithms. pp.~255--256. Society for Industrial and Applied Mathematics, USA (2003).

\bibitem{sokol_15} Sokolinsky, L.B.: BSF: A parallel computation model for scalability estimation of iterative numerical algorithms on cluster computing systems. Journal of Parallel and Distributed Computing. 149, 193--206 (2021). \url{https://doi.org/10.1016/j.jpdc.2020.12.009}.

\bibitem{sokol_15_1} Sokolinsky, L.B.: Analytical Estimation of the Scalability of Iterative Numerical Algorithms on Distributed Memory Multiprocessors. Lobachevskii Journal of Mathematics. 39(4), 571--575 (2018). \url{http://dx.doi.org/10.1134/S1995080218040121}.

\bibitem{sokol_16} Sahni, S., Vairaktarakis, G.: The master--slave paradigm in parallel computer and industrial settings. Journal of Global Optimization. 9(3--4), 357--377 (1996). \url{https://doi.org/10.1007/BF00121679}.

\bibitem{sokol_17} Sokolinsky, L.B.: BSF-skeleton. User manual. \href{https://arxiv.org/abs/2008.12256}{arXiv:2008.12256 [cs.DC]}. (2020).

\bibitem{sokol_19} Gropp, W.: MPI 3 and Beyond: Why MPI Is Successful and What Challenges It Faces. In: Traff, J.L., Benkner, S., and Dongarra, J.J. (eds.) Recent Advances in the Message Passing Interface. EuroMPI 2012. Lecture Notes in Computer Science, vol. 7490. pp.~1--9. Springer, Berlin, Heidelberg (2012). \url{https://doi.org/10.1007/978-3-642-33518-1\_1}.

\bibitem{sokol_20} Bird, R.S.: Lectures on Constructive Functional Programming. In: Broy, M. (ed.) Constructive Methods in Computing Science. NATO ASI Series F: Computer and Systems Sciences, vol. 55. pp.~151--216. Springer, Berlin, Heidlberg (1988).

\bibitem{sokol_21} Kostenetskiy, P., Semenikhina, P.: SUSU Supercomputer Resources for Industry and fundamental Science. In: Proceedings --- 2018 Global Smart Industry Conference, GloSIC 2018, art. no. 8570068. p.~7. IEEE (2018). \url{https://doi.org/10.1109/GloSIC.2018.8570068}.

\end{thebibliography}
\end{document}